\newcommand{\nc}{\newcommand}  
\nc{\di}{\displaystyle} \nc{\nn}{\nonumber} 
\nc{\ns}{\normalsize} \nc{\seq}{\subseteq} 
\renewcommand{\Re}{\mathrm{Re}\,} \renewcommand{\Im}{\mathrm{Im}\,} 
\nc{\AH}{\mathcal{A}} \nc{\BE}{\mathcal{B}} \nc{\CE}{\mathcal{C}} 
\nc{\D}{\mathcal{D}} \nc{\E}{\mathcal{E}} \nc{\EF}{\mathcal{F}} 
\nc{\GE}{\mathcal{G}} \nc{\HA}{\mathcal{H}} \nc{\J}{\mathcal{J}} 
\nc{\KA}{\mathcal{K}} \nc{\EL}{\mathcal{L}} \nc{\PE}{\mathcal{P}} 
\nc{\ER}{\mathcal{R}} \nc{\ES}{\mathcal{S}} \nc{\TE}{\mathcal{T}} 
\nc{\EM}{\mathcal{M}} \nc{\EN}{\mathcal{N}} \nc{\OH}{\mathcal{O}} 
\nc{\U}{\mathcal{U}} \nc{\WE}{\mathcal{W}} \nc{\EX}{\mathcal{X}} 
\nc{\Y}{\mathcal{Y}} 
\nc{\ma}[1]{\mbox{$\,{#1}\,$}} \nc{\ek}{\protect\\[1ex]} 
\nc{\zx}{\protect\\[2ex]}
 \newcommand{\R}{{\mathbb R}} 
 \nc{\IR}{\mbox{\bf R}} \nc{\IN}{\mbox{\bf N}} 
\nc{\ZZ}{\mbox{\bf Z}} \nc{\la}{\lambda} \nc{\La}{\Lambda} 
\nc{\da}{\delta} \nc{\Da}{\Delta} \nc{\ta}{\theta} \nc{\Ta}{\Theta} 
\nc{\na}{\nabla} \nc{\ue}{\infty} \nc{\vp}{\varphi} \nc{\vta}{\vartheta} 
\nc{\Gm}{\Gamma} \nc{\gm}{\gamma} \nc{\ka}{\kappa} \nc{\si}{\sigma} 
\nc{\Si}{\Sigma} \nc{\al}{\alpha} \nc{\be}{\beta} \nc{\om}{\omega} 
\nc{\Om}{\Omega} \nc{\pa}{\partial} \nc{\ti}{\times} \nc{\n}{|} 
\nc{\rub}{\,\rule[-2.7pt]{.02in}{4mm}\,} \nc{\ab}{\|} \nc{\s}{\tilde} 
\nc{\ve}{\varepsilon} \nc{\fa}{\forall} \nc{\ov}{\overline} 
\nc{\un}{\underline} \nc{\Llr}{\Longleftrightarrow} 
\nc{\llr}{\longleftrightarrow} \nc{\ra}{\rightarrow} 
\nc{\lra}{\longrightarrow} \nc{\rh}{\rightharpoonup} \nc{\Ra}{\Rightarrow} 
\nc{\ran}{\rangle} \nc{\lan}{\langle} \nc{\bs}{\backslash} \nc{\ko}{\,,\,} 
\nc{\eq}[1]{\mbox{\rm {(\ref{E#1})}}} 
\nc{\ha}{\frac{1}{2}} 
\nc{\kla}{\,[\,} \nc{\klz}{\,]\,} \nc{\lk}{\left[} \nc{\rk}{\right]} 
\nc{\lb}{\left\{} \nc{\rb}{\right\}} \nc{\rr}{\right)} \nc{\lr}{\left(} 
\nc{\f}{\big(} \nc{\g}{\big)} \nc{\Ba}{\Big(} \nc{\Bz}{\Big)} 
\nc{\Bka}{\Big[} \nc{\Bkz}{\Big]} \nc{\bka}{\big[} \nc{\bkz}{\big]} 
\nc{\Blb}{\Big\{} \nc{\Brb}{\Big\}} \nc{\blb}{\big\{} \nc{\brb}{\big\}} 
\nc{\pn}{\par\noindent} \nc{\emp}{\emptyset} \nc{\Ri}{\Rightarrow} 
\nc{\hph}{\hphantom} \nc{\vph}{\vphantom} 
\nc{\vpn}{\vspace{2ex}\par\noindent} \nc{\vpar}{\vspace{2ex}\par} 
\nc{\mathe}[1]{\mbox{${\di {#1}}$}} \nc{\equno}[1]{\\[-.5ex] 
\mbox{}\label{#1}\\[-.5ex]} \nc{\tr}{{\mathrm{tr}}\,} \sloppy 
\newtheorem{lem}{Lemma}
\begin{document} 
\title{{\Large {\bf 
A recipe 
for making materials with negative refraction in acoustics. }}} \author{A. 
G. Ramm \\ {\ns (Mathematics Department, Kansas St. University,}\\ {\ns 
Manhattan, KS66506, USA} \\ {\ns and TU Darmstadt, Germany)}\\ {\small 
ramm@math.ksu.edu}} \date{} 
\maketitle 
\begin{abstract}\noindent A 
recipe is given for making materials with negative refraction in 
acoustics, i.e., materials in which the group velocity is directed 
opposite to the phase velocity.

The recipe consists of injecting many small particles into a bounded 
domain, filled with a material whose refraction coefficient is known. The 
number of small particles to be injected per unit volume around any point 
$x$ is calculated as well as the  boundary impedances of
 the embedded particles. 
\end{abstract}

{\small PACS:03.04.Kf  \\ MSC: 35J05, 35J05, 35J10, 70F10, 74J25, 81U40, 
81V05 } 
\ek {\small Keywords: metamaterials, "smart" materials, wave scattering by small bodies} 

\section{Introduction }\label{S1} 
Suppose that a bounded domain $D\subset \R^3$ is filled with a material 
whose refraction coefficient $n_0^2(x)$ is known. Acoustic wave scattering 
problem consists of finding the solution to the equation
\begin{eqnarray}
\label{E1}[ \na^2 + k^2 n_0^2(x)] u & = & 0 \text{ in } \R^3,\ek
\label{E2}u=e^{ik\al \cdot x} + v;&& r\f\frac{\pa v}{\pa r}-ikv\g	
=o(1), \ r\to \infty.
\end{eqnarray}
Here $\al \in S^2$ is a given unit vector, the direction of the incident wave, $S^2$ is the unit sphere in $\R^3$, $k>0$ is the wave number, $n_0^2(x)=1$ in $D':=\R^3 \backslash D$, and $v$ is the scattered field. We assume that $\Im n_0^2(x)\ge 0$. This corresponds to possible absorption in the material. The function $u$ is the acoustic potential or pressure, and the time dependence factor $e^{-i\om t}$ is omitted. It is known that problem \eq{1} - \eq{2} has a solution and this solution is unique.

The problem we study in this paper is the following one:
\pn
\textit{Can one create in $D$ a material with negative refraction, that is, a material in which the group velocity is directed opposite to the phase velocity?}
\pn
If there is a wave-packet $u=\sum_k a(k) e^{i(k\cdot r-\om(k)t)}$, 
where $| k-\ov{k} | < \da$, $| \om(k) - \om(\ov{k})| < \da$, $\da>0$ 
is a small number, $a(k)$ is the complex amplitude, $\ov{k}$ is the
wave vector around which the other wavevectors are grouped,  
and summation over $k$ should be replaced by integration 
if vector $k$ changes continuously, then the 
group velocity of this packet is defined by the formula: 
$$v_g=\na_k\om(k).$$ 
Its phase velocity is defined by the formula: 
$$v_p=\frac{\om(k)}{| k |}k^0,$$ 
where $k^0:=\frac{k}{|k|}$,  $|k|$ is the length of the wave vector 
$k$, and
$\na_k \om(k)$ is the gradient of the scalar function $\om(k)$.

Suppose $\om(k)>0$. Then $v_p$ is directed along $k$. Vector 
$v_g=\na_k\om(k)$ is directed along $-k$ if $\om=\om(|k|)$ and 
$\om'(|k|)<0$. Indeed, 
$$\na_k \om(|k|)=\om'(|k|)k^0,$$
where the prime denotes the derivative with respect to the
argument $|k|$.

Therefore, one can produce the desired material with negative refraction 
if one can produce the material with
	\[
	\om=\om(|k|) \quad \text{and} \quad \om'(|k|)<0.
\]

The notion of negative refraction in optics is discussed in detail
in \cite{A}, where the role of spatial dispersion 
in creating materials with negative refraction is emphasized. In 
\cite{R476} a theory of wave scattering by small
bodies of arbitrary shapes is developed. On the basis of this theory
it is proved in \cite{R529} and in \cite{R528} that one can create 
material with a desired 
refraction coefficient $n^2(x)$ at a fixed frequency by embedding many 
small particles into the region $D$, filled with a material with
known refraction coefficient $n_0^2(x)$.

Let us formulate one of the results from \cite{R529}.

Suppose that the small particles are balls $B_m$, $1\le m 
\le M$, $M$ is the total number of small balls embedded in $D$,  
the radius of these balls is $a$, the distance $d$ between two neighboring 
balls is of 
order $O(a^{1/3})$, the boundary condition on the surface $S_m$ of the 
$m-$th ball is 
$$u_N=\zeta_m u \quad \hbox{\,\, on\,\, } S_m,$$ 
where  $\zeta_m:=\frac{h(x_m)}{a}$,  
 $x_m$ is the center of the $m$-th ball, and
$h(x)$ is an arbitrary 
given continuous function in $D$, with the property $\Im h(x)\le 0$. 

Assume that
$$\lim_{a \to 0} \{ a\EN(\Da)\} = \int_\Da N(x)dx$$ 
for any subdomain 
$\Da \subset D$, where $\EN(\Da)$ is the number of particles in $\Da$, 
and $N(x)\ge 0$ is an arbitrary given continuous function in $D$.
Let $[b]$ denote the nearest integer to the numbe $b\geq 0$.

{\it Under these assumptions  it 
is proved in \cite{R529} that embedding in every subcube $\Da_p \subset D$ 
the number $[\frac{1}{a}\int_\Da N(x)dx]$ particles at the distances of 
order $O(a^{1/3})$ with boundary parameters $\zeta_m=\frac{h(x_m)}{a}$, 
creates new
material in $D$ with the refraction coefficient} 
$$ n^2(x)=n_0^2(x) - k^{-2}p(x),$$ 
{\it where}
\begin{equation}
	\label{E3}
	p(x)=\frac{4\pi h(x)}{1+h(x)}N(x), \quad x \in D.
\end{equation}
  Given $p = p_1 + ip_2$, where $p_1 = {\rm Re}\, p$, $p_2 = {\rm Im}\, p
\le 0$, one finds (non-uniquely) $h_1(x)=\Re h$, $h_2(x) = \Im h(x) \leq 
0$, and
$N(x)\geq 0$, such that \eq{3} holds. In \cite{R529} it was assumed that
$k$ was fixed,
$k=\frac{\om}{c}$, $c = {\rm const}$ is the wave speed in the exterior 
homogeneous region $D'=\R^3\setminus D$. 

{\it The purpose of this paper is to point out that
the method in \cite{R529} is valid in the case when $p = p(x,\om)$.
This implies that this method allows one to create the material
with the desired spatial dispersion. In particular, one 
can create a material with negative refraction in acoustics.}

Indeed, by the physical meaning of $h$ one can prepare small particles 
with impedances $\zeta = \frac{h(x,\om)}{a}$, depending on the frequency
$\om$ in a desired way. 

Let us show that equation \eq{3} can be solved
non-uniquely with respect to the three functions: $h_1(x,\om)$,
$h_2(x,\om)\leq 0$, and $N(x)\geq 0$, given two arbitrary functions
$p_1(x,\om)$ and $p_2(x,\om)\geq 0$. We emphasize that $N(x)$ can be
chosen so that  {\em it does not depend on $\om$}.
\begin{lem}\label{L1} For any $p_1(x,\om)$ and $p_2(x,\om)\leq 0$, there
exist three real-valued functions $h_1(x,\om)$, $h_2(x,\om)\leq 0$, and 
$N(x)\geq 0,$ such that
equation \eq{3} is satisfied. The choice of these functions $h_1$, 
$h_2\leq 0$ and
$N(x)\geq 0$ is non-unique. 
\end{lem}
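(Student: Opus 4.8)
The plan is to regard $N(x)$ as a free parameter, fix it \emph{first} (and independently of $\om$), and thereby turn \eq{3} into a single complex scalar equation for $h$ that is solved by simply inverting the M\"obius transformation $h\mapsto h/(1+h)$.

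Concretely, \eq{3} is equivalent to $p\,(1+h)=4\pi N h$, that is, to
\[
h=\frac{p}{4\pi N-p},
\]
provided $4\pi N(x)\ne p(x,\om)$. Assuming, as is implicit in the statement, that $p=p_1+ip_2$ is continuous on $D$ and bounded in $\om$, I would choose any continuous function $N(x)>\frac{1}{4\pi}\,\sup_\om|p(x,\om)|$; this $N$ does not depend on $\om$, and for it $\Re(4\pi N-p)>0$, so the denominator never vanishes. Writing out real and imaginary parts,
\[
h_1=\frac{p_1(4\pi N-p_1)-p_2^2}{(4\pi N-p_1)^2+p_2^2},\qquad
h_2=\frac{4\pi N\,p_2}{(4\pi N-p_1)^2+p_2^2}.
\]
Both are continuous in $(x,\om)$ since the denominator is positive, and since $N>0$ the sign of $h_2$ equals that of $p_2$, so $h_2\le 0$ exactly when $p_2\le 0$, as required. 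Thus $(h_1,h_2,N)$ solves \eq{3} with all sign constraints met and with $N$ independent of $\om$.

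Non-uniqueness is then immediate: every continuous $N(x)$ exceeding the threshold $\frac{1}{4\pi}\sup_\om|p(x,\om)|$ produces, via the displayed formulas, a different admissible triple, and there are infinitely many such $N$; one may even relax the requirement on $N$ to the weaker $4\pi N\ne p$ and enlarge the family further. The only delicate point — the sole candidate for an obstacle — is the pole of $h\mapsto h/(1+h)$ at $h=\infty$, equivalently at $4\pi N=p$; it is precisely avoided by taking $N$ large enough, which in turn forces the mild hypothesis that $p$ be bounded in $\om$ so that such an $\om$-independent $N$ exists. No other difficulty arises, the remaining steps being elementary algebra.
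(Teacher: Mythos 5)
Your proposal is correct and takes essentially the same route as the paper: choose $N(x)$ independent of $\om$ and large enough that $4\pi N$ dominates $|p|$, then invert the map $h\mapsto 4\pi N h/(1+h)$, with non-uniqueness coming from the free choice of $N$. Your explicit Cartesian formulas for $h_1,h_2$ are just the paper's polar-coordinate solution \eq{13}--\eq{14} written out directly, and your largeness condition on $N$ matches the paper's requirement $\rho<1$.
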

\begin{proof}
Our proof yields a method for finding a triple $\{ h_1,h_2,N(x)\}   
$ given the tuple $\{p_1,p_2\}$, and analytic formulas for
the functions $h_1$, $h_2$, and $N(x)$, see formulas (7), (13),
(14) and (15).  

Denote 
$$R:= (p_1^2 + p_2^2)^{1/2},$$ 
and let $\psi$ be the argument of $p= p_1 + ip_2$, so that
\begin{equation}
\label{E4} 
p_1 = R\cos \psi,\quad p_2 = R\sin \psi; \quad R=R(x,\omega).
\end{equation}
Equation \eq{3} is equivalent to the following:
\begin{eqnarray}
\label{E5}
\frac{4\pi N(x)[h_1(1 + h_1) + h_2^2]}{(1 + h_1)^2 + h_2^2}&= &R\cos \psi,\\
\label{E6}
\frac{4\pi N(x) \, h_2}{(1 + h_1)^2 + h_2^2}&=&R\sin \psi.
\end{eqnarray}
Let 
\begin{equation}
\label{E7}
(1 + h_1)^2 + h_2^2:= r^2,\quad h_2 = r\sin \vp,\quad 1 + h_1 = r\cos \vp.
\end{equation}
Equations \eq{5} -- \eq{6} can be written as:
\begin{eqnarray}
\label{E8}
4\pi N(x)[1-\frac{1}{r}\, \cos \vp] &=& R \, \cos \psi,\\
4\pi N(x) \frac{\sin \vp}{r} & = & R\, \sin \psi.\label{E9}
\end{eqnarray}
Let 
\begin{equation}
\label{E10}
\frac{R}{4\pi N(x)}:= \begin{cases} \rho(x,\om) & \mbox{if } R(x,\om) >
0,\\
0 & \mbox{if } R(x,\om) = 0. 
\end{cases}
\end{equation}
Then 
\begin{eqnarray}
\label{E11} - \frac{\cos \vp}{r} &= & \rho \, \cos \psi - 1, \\
\label{E12} \frac{\sin \vp}{r} & = & \rho \, \sin \psi.
\end{eqnarray}
Given $R$ and $\psi$, we want to find $r,\vp$ and $N(x)$ so that
equations \eq{10} -- \eq{12} hold.
\par
From \eq{11} -- \eq{12} one gets:
\begin{equation}
\label{E13}
r = \frac{1}{\sqrt{\rho^2 - 2\rho\, \cos \psi + 1}}\,.
\end{equation}
If $r$ is found by formula \eq{13}, one finds 
\begin{equation}
\label{E14}
\sin \vp = \frac{\rho \, \sin \psi}{\sqrt{\rho^2-2\rho\, \cos \psi +
1}}\,,\quad \cos \vp = \frac{1-\rho \, \cos \psi}{\sqrt{\rho^2-2\rho\,
\cos \psi + 1}}\,.
\end{equation}
By definition \eq{10} $\rho \geq 0$. Choosing $N(x)$ so that $\rho < 1$,
one can make sure that $1-2\rho\, \cos \psi + \rho^2 > 0$ and $\cos
\vp > 0$. The choice of $N(x)\geq 0$ is made independently of $\om$. 

One has:
\begin{equation}
\label{E15}
R(x,\om) = 4\pi N(x)\, \rho(x,\om).
\end{equation}
If $p_2\leq 0$, which corresponds to some absorption in the material,
then $\sin \psi \leq 0$, $h_2 \leq 0$, and $\sin \vp \leq 0$. If $N(x)$ is
sufficiently large, so that $\rho < 1$, then \eq{14} implies $\cos \vp >
0$ and $\sin \vp \leq 0$, so $-\frac{\pi}{2} < \vp \leq 0$. One can get
the dispersion relation for $\om(k)$ from the equation
\begin{equation}
\label{E16}
\frac{\om^2}{c^2}\, n^2(x,\om)=k^2,
\end{equation}
where $c = {\rm const} $ is the speed of the waves in the free space.
\par
From \eq{16} one gets
\begin{equation}
\label{E17}
\om n(x,\om) = \n k\n c\,.
\end{equation}
Taking gradient with respect to $k$, one gets
\begin{equation}
\label{E18}
\bka n(x,\om) + \om\, \frac{\pa n}{\pa \om}\bkz \na_k\om = ck^0,
\quad k^0 = \frac{k}{\n k\n} \,.\end{equation}
Therefore the group velocity $$v_g = \na_k\om$$ will be directed opposite
to $k^0$ if
\begin{equation}
\label{E19}
\frac{\om}{n}\: \frac{\pa n}{\pa \om}< -1.
\end{equation}
In deriving \eq{19} we assume that $n(x,\om)$ is a real-valued function,
or that the imaginary part of $n(x,\om)$ in $D$ is negligible, so that
when the wave is propagating through $D$ its amplitude changes just a
little. For this to happen it is sufficient to assume that $L\n \Im
n\n\ll 1$, where $L$ is the diameter of $D$.
\par
Since an arbitrary function  $n(x,\om)$ can be created  by the method of
\cite{R529}, one can ensure that inequality \eq{19} holds. This implies
that the created material has negative refraction in the sense that the
group velocity vector is directed opposite to the wave vector $k$.
\end{proof}
{\it Example.} If $n(x,\om)=(1+c\om^2)^{-1}$, where $c>0$
is a constant, then inequality \eq{19} holds if $c\om^2>1$.
Suppose that the frequency interval is $[\om_{min}, \om_{max}]$.
Then inequality \eq{19} holds for all frequencies from this interval
if $c>\frac 1{\om_{min}^2}$. 

{\it Let us give a summary of our method for creating
material with the desired refraction coefficient:}

{\it Step 1.} Given the desired refraction coefficient $n^2(x,\om)$ in 
$D$, and the 
original  refraction coefficient $n_0^2(x,\om)$, 
calculate the function
$p(x,\om)=k^2[n_0^2(x,\om)-n^2(x,\om)]$.

{\it Step 2.} Given $p(x,\om)$, solve eq. (3) for 
$h(x,\om)=h_1(x,\om)+ih_2(x,\om)$,
$h_2(x,\om)\leq 0,$ and $N(x)\geq 0$.

There are many solutions: the three real-valued functions
$h_1, h_2\leq 0,$ and $N(x)\geq 0$ are to be found from the two
real-valued functions $p_1$ and $p_2\leq 0$.

In this step the solutions can be calculated
analytically.

{\it Step 3.} After $h(x,\om)$ and $N(x)$ are found in Step 2, one
partitions the domain $D$ into a union of small
non-intersecting cubes $Q_j$, centered at the points $x_j$, and
embeds in each $Q_j$ the number  $\nu_j:=[\int_{Q_j}N(x)dx/a]$ of small
balls of radius $a$, each having the boundary parameter
$\zeta_j=h(x_j,\om)/a$, where the symbol $[b]$ denotes the closest integer
to the number $b\geq 0$.


\begin{thebibliography}{99}
\bibitem{A} Agranovich, V.M., Gartstein, Yu.N., Spatial dispersion and 
negative refraction of light, Uspekhi Phys. Nauk, {\bf 176}, N10, (2006), 
1051--1068.
\bibitem{R476} Ramm, A.G., Wave scattering by small bodies of
arbitrary shapes, World Sci. Publ., Singapore, 2005.
\bibitem{R528}  Ramm, A.G., 
Scattering by many small bodies and applications
to condensed matter physics, {\em EPL} (Europ. Physics Lett.)  {\bf 80}, 
(2007), 44001.

\bibitem{R529}  Ramm,  A.G., 
Many-body wave scattering by small bodies and 
applications, {\em J. Math. Phys.}, {\bf 48}, 10, (2007),  103511.

\end{thebibliography}
\end{document}